
\documentclass[conference, onecolumn,12pt,twoside]{IEEEtran}
\linespread{1.6}

\addtolength{\topmargin}{9mm}

%
%
\usepackage[utf8]{inputenc}
\usepackage[T1]{fontenc}
\usepackage{url}
\usepackage{ifthen}
\usepackage{cite}
\usepackage[cmex10]{amsmath} 
\usepackage{amssymb}
\usepackage{amsthm}
\usepackage{amsmath}
\setlength{\arraycolsep}{1.6pt}  
\newtheorem{theorem}{Theorem}
\usepackage{mathtools}

\usepackage{makecell}
\usepackage{graphicx}
\usepackage{subfigure}

\interdisplaylinepenalty=2500 

\usepackage{xcolor}
\definecolor{red}{rgb}{1,0,0}
\usepackage{color,soul}
\definecolor{lightblue}{rgb}{.90,.95,1}
\sethlcolor{yellow} 

\hyphenation{op-tical net-works semi-conduc-tor}

\begin{document}
\title{High-Rate Spatially Coupled LDPC Codes Based on Massey's Convolutional Self-Orthogonal Codes}



\author{%
    \IEEEauthorblockN{Daniel J. Costello, Jr.\IEEEauthorrefmark{1},
                      Min Zhu,\IEEEauthorrefmark{2}
                     David G. M. Mitchell\IEEEauthorrefmark{3},
                     and Michael Lentmaier\IEEEauthorrefmark{4},
                    }
   \IEEEauthorblockA{\IEEEauthorrefmark{1}%
                     \small{Department of Electrical Engineering, University of Notre Dame, Notre Dame, IN, USA,
                     dcostel1@nd.edu}}
   \IEEEauthorblockA{\IEEEauthorrefmark{2}%
                     \small{State Key Laboratory of ISN, Xidian University, Xi'an, P. R. China,
                     zhunanzhumin@gmail.com}}
   \IEEEauthorblockA{\IEEEauthorrefmark{3}%
                     \small{Klipsch School of Electrical and Computer Engineering, New Mexico State University, Las Cruces, NM, USA, dgmm@nmsu.edu}}
   \IEEEauthorblockA{\IEEEauthorrefmark{4}%
                     \small{Department of Electrical and Information Technology, Lund University, Lund, Sweden,
                     michael.lentmaier@eit.lth.se}}
}

\maketitle


\begin{abstract}
   In this paper, we study a new class of high-rate spatially coupled LDPC (SC-LDPC) codes based on the convolutional self-orthogonal codes (CSOCs) first introduced by Massey.  The SC-LDPC codes are constructed by treating the irregular graph corresponding to the parity-check matrix of a systematic rate $R = (n - 1)/n$ CSOC as a convolutional protograph.  The protograph can then be lifted using permutation matrices to generate a high-rate SC-LDPC code whose strength depends on the lifting factor.  The SC-LDPC codes constructed in this fashion can be decoded using iterative belief propagation (BP) based sliding window decoding (SWD).

 A non-systematic version of a CSOC parity-check matrix is then proposed by making a slight modification to the systematic construction.  The non-systematic parity-check matrix corresponds to a regular protograph whose degree profile depends on the rate and error-correcting capability of the underlying CSOC.  Even though the parity-check matrix is in non-systematic form, we show how systematic encoding can still be performed. We also show that the non-systematic convolutional protograph has a guaranteed girth and free distance and that these properties carry over to the lifted versions.

 Finally, numerical results are included demonstrating that CSOC-based SC-LDPC codes (i) achieve excellent performance at very high rates, (ii) have performance at least as good as that of SC-LDPC codes constructed from convolutional protographs commonly found in the literature, and (iii) have iterative decoding thresholds comparable to those of existing SC-LDPC code designs.
\end{abstract}

\section{INTRODUCTION}
\emph{Spatially coupled low-density parity-check} (SC-LDPC) codes, also known as LDPC convolutional codes \cite{Felstrom1999TIT}, combine the best features of both regular and irregular LDPC block codes (LDPC-BCs) \cite{Lentmaier2010TIT,Kudekar2011TIT}. Their excellent performance relies on two important facts. One is that they exhibit \emph{threshold saturation}, \emph{i.e.}, the suboptimal belief propagation (BP) iterative decoding threshold of SC-LDPC code ensembles over memoryless binary-input symmetric-output channels coincides with the optimal \emph{maximum a posteriori probability} (MAP) threshold of their underlying LDPC-BC ensembles, thereby allowing an SC-LDPC code to achieve the best possible performance of its underlying LDPC-BC with less decoding complexity. The other is that BP-based iterative \emph{sliding window decoding} (SWD) can be employed to reduce decoding latency, memory, and complexity \cite{Iyengar2012TIT}. In many applications, such as optical communications, high-rate codes are desirable (see, \emph{e.g.}, \cite{Benjamin2012}).  In this paper, we propose a new class of high-rate SC-LDPC codes based on rate $R = (n - 1)/n$ convolutional self-orthogonal codes (CSOCs) first introduced by Massey \cite{Messay1963}. In their original formulation, CSOCs were decoded using either hard decision majority-logic (threshold) decoding or soft decision \emph{a posteriori probability} (APP) decoding.  Both of these non-iterative decoding methods offered the advantage of low complexity implementation, but suffered from a wide performance gap compared to capacity.  Here we consider CSOCs as a type of LDPC convolutional code and investigate the use of moderate complexity BP-based iterative SWD to boost performance.  Treating the CSOC as a convolutional protograph then allows us to lift the graph with permutation matrices to further improve performance \cite{David2015TIT}.

 One difficulty encountered when using lifted systematic CSOC protographs with BP decoding is that performance suffers due to the fact that the graph corresponding to the systematic parity-check matrix always contains a degree 1 variable node (VN).  To overcome this difficulty, we propose a slightly modified CSOC with a non-systematic parity-check matrix and a regular graph structure, whose degree profile depends on the rate and the error-correcting capability of the underlying CSOC.  We then show that this modification improves performance while still allowing for systematic encoding.  Further, when lifting is employed, we show that the girth and free distance properties of the underlying non-systematic CSOC are maintained.

 The paper is organized as follows.  In Section II, we review the design of systematic rate $R = (n - 1)/n$ CSOCs, highlight the ease with which very high-rate codes can be constructed, and describe how they can be interpreted as LDPC convolutional codes.  Then a modification of the CSOC design, which results in a non-systematic parity-check matrix with a regular structure, while still allowing systematic encoding, is introduced in Section III.  In Section IV, numerical results are presented demonstrating that these CSOC-based SC-LDPC codes (i) achieve excellent performance at very high rates, (ii) perform at least as well as codes lifted using convolutional protographs commonly found in the literature, and (iii) have iterative decoding thresholds comparable to those of existing SC-LDPC code designs. Finally, Section V contains some conclusions and remarks.

\section{CSOC-Based SC-LDPC Codes}

CSOCs are generated by systematic feedforward convolutional encoders.  For each information bit in the initial (time unit 0) block of code bits, the parity bits that check that information bit form an orthogonal set, \emph{i.e.}, no other code bit is checked by more than one member of the orthogonal set.  If $J$ orthogonal checks can be formed on each information bit in the initial block, then hard-decision majority-logic (threshold) decoding can be used to guarantee that any set of $\left\lfloor {J/2} \right\rfloor $ channel errors within a span of one constraint length of received symbols can be corrected.  Massey \cite{Messay1963} originally constructed CSOCs by hand, while a more efficient construction that uses the theory of perfect difference sets was introduced later by Robinson and Bernstein \cite{Robinson1967TIT}.

CSOCs can be constructed for a variety of rates, but here we focus only on rate $R = (n - 1)/n$ CSOCs with a single parity bit, since our interest is in high-rate SC-LDPC code designs.  A rate $R = (n - 1)/n$ CSOC with $J$ orthogonal checks and error-correcting capability $\left\lfloor {J/2} \right\rfloor $ can be described by the $(n - 1) \times n$ polynomial \emph{generator matrix}
\begin{equation}
{\bf{G}}\left( D \right) = \left[ {{{\bf{I}}_{n - 1}}:{{\bf{P}}^{\rm{T}}}\left( D \right)} \right],
\end{equation}
where
${{\bf{P}}}\left( D \right) = \left[ {{{\mathrm{g}}^{(1)}}\left( D \right),{\rm{ }}{{\mathrm{g}}^{(2)}}\left( D \right),{\rm{ }} \ldots ,{\rm{ }}{{\mathrm{g}}^{(n - 1)}}\left( D \right)} \right]$ is the set of $n-1$ \emph{generator polynomials}, ${{\mathrm{g}}^{(i)}}\left( D \right){\rm{ }} = {\rm{ }}{{\mathrm{g}}_0}^{(i)} + {\rm{ }}{{\mathrm{g}}_1}^{(i)}D{\rm{ }} + {\rm{ }} \cdots {\rm{ }} + {\rm{ }}{{\mathrm{g}}_m}^{(i)}{D^m}$,~ ${{\mathrm{g}}_j}^{(i)} \in \left\{ {0,1} \right\}{\rm{,~}}j{\rm{ }} = {\rm{ }}0,{\rm{ }}1, \ldots ,{\rm{ }}m,{\rm{~ }}i{\rm{ }} = {\rm{ }}1,{\rm{ }}2,{\rm{ }} \ldots ,{\rm{ }}n-1$, $m$ is the code \emph{memory}, and $\emph{v} = n(m + 1)$ is the code \emph{constraint length}. CSOCs have the properties that (i) each time unit 0 generator bit ${\mathrm{g}}_0^{\left( i \right)} = 1$, (ii) each generator polynomial ${\mathrm{g}}^{(i)}{\left( D \right)}$ has Hamming weight exactly $J$, $i=0,1,2,\ldots,n-1$, and (iii) the \emph{minimum free distance} is $d_{\rm{free}}=J+1$. Since the encoder is systematic, it follows directly that the polynomial parity check matrix is given by
\begin{equation}
{\bf{H}}\left( D \right) = \left[ {{\bf{P}}\left( D \right):1} \right],
\end{equation}
and that ${\bf{G}}\left( D \right){{\bf{H}}^{\rm{T}}}\left( D \right) = {\bf{0}}$.

If ${\bf{v}}\left( D \right){\rm{ }} = {\rm{ }}[{{\bf{v}}^{\left( 0 \right)}}\left( D \right),{{\bf{v}}^{\left( 1 \right)}}\left( D \right),{\rm{ }} \ldots ,{{\bf{v}}^{\left( {n - 1} \right)}}\left( D \right)]$ represents a codeword, where ${{\bf{v}}^{\left( i \right)}}\left( D \right){\rm{ }} = {\rm{ }}{v_0}^{(i)} + {\rm{ }}{v_1}^{(i)}D{\rm{ }} + {\rm{ }}{v_2}^{(i)}{D^2} + {\rm{ }} \cdots {\rm{ }}$, ${v_j}^{(i)} \in \left\{ {0,1} \right\}$, $j=0,1,2,\cdots$, $i=0,1,2,\cdots,n-1$, is the $i^{\rm{th}}$ encoder output sequence, it follows that ${\bf{v}}\left( D \right){{\bf{H}}^{\rm{T}}}\left( D \right) = {\bf{0}}$, or equivalently
\begin{equation}
{\bf{H}}\left( D \right){{\bf{v}}^{\rm{T}}}\left( D \right) = {\bf{0}}.
\end{equation}
This \emph{parity-check equation} can also be written in the \emph{time domain} as
\begin{equation}
{\bf{H}}{{\bf{v}}^{\rm{T}}} = {\bf{0}},
\end{equation}
where ${\bf{v}} = \left[ {{{\bf{v}}_0},{{\bf{v}}_1},{{\bf{v}}_2}, \ldots } \right] = [{v_0}^{(0)},\ldots, {v_0}^{(n - 1)}; {v_1}^{(0)},\ldots, {v_1}^{(n - 1)};{\rm{ }}{v_2}^{(0)},\ldots, {v_2}^{(n - 1)};\ldots ]$ represents the time domain codeword corresponding to ${\bf{v}}\left( D \right)$, with ${{\bf{v}}_j} = \left[ {{v_j}^{(0)},{\rm{ }} \ldots ,{\rm{ }}{v_j}^{(n - 1)}} \right]$ denoting the $n$ code bits at time unit $j$, $j = 0, 1, 2, \ldots$, and
\begin{equation}\label{eq:HmatrixCSOC}
\begin{split}
{\bf{H}} =
 \left[ {\begin{array}{*{20}{c}}
{g_0^{\left( 1 \right)}}&{g_0^{\left( 2 \right)}}& \cdots &{g_0^{\left( {n - 1} \right)}}&1&{}&{}&{}&{}&{}&{}\\
{g_1^{\left( 1 \right)}}&{g_1^{\left( 2 \right)}}& \cdots &{g_1^{\left( {n - 1} \right)}}&0&{g_0^{\left( 1 \right)}}&{g_0^{\left( 2 \right)}}& \cdots &{g_0^{\left( {n - 1} \right)}}&1&{}\\
 \vdots & \vdots &{}& \vdots & \vdots &{g_1^{\left( 1 \right)}}&{g_1^{\left( 2 \right)}}& \cdots &{g_1^{\left( {n - 1} \right)}}&0& \ddots \\
{g_m^{\left( 1 \right)}}&{g_m^{\left( 2 \right)}}& \cdots &{g_m^{\left( {n - 1} \right)}}&0& \vdots & \vdots &{}& \vdots & \vdots &{}\\
{}&{}&{}&{}&{}&{g_m^{\left( 1 \right)}}&{g_m^{\left( 2 \right)}}& \cdots &{g_m^{\left( {n - 1} \right)}}&0&{}\\
{}&{}&{}&{}&{}&{}&{}&{}&{}&{}& \ddots
\end{array}} \right]
\end{split}
\end{equation}
represents the time domain parity-check matrix corresponding to ${\bf{H}}\left( D \right)$. Note that, for each ${\bf{v}}_j$, the first $n-1$ bits ${v_j}^{(0)},\cdots, {v_j}^{(n-2)}$ are the information bits and the last bit ${v_j}^{(n-1)}$ is the parity bit, and, in each set of $n$ columns of $\bf{H}$, the first $n-1$ columns correspond to the information bits and the last column corresponds to the parity bit. Also, we can see from \eqref{eq:HmatrixCSOC} that the parity-check matrix $\bf{H}$ of a CSOC cannot contain any 4-cycles, since otherwise the orthogonality condition would be violated.

\textbf{EXAMPLE 1}: Consider the rate $R=2/3$, memory $m=13$, CSOC with generator polynomials ${g^{(1)}}\left( D \right){\rm{ }} = {\rm{ }}1{\rm{ }} + {\rm{ }}{D^8} + {\rm{ }}{D^9} + {\rm{ }}{D^{12}}$ and ${g^{(2)}}\left( D \right){\rm{ }} = {\rm{ }}1{\rm{ }} + {\rm{ }}{D^6} + {\rm{ }}{D^{11}} + {\rm{ }}{D^{13}}$ \cite{Messay1963}. The time domain parity-check matrix of this code is given by
\begin{equation}\label{eq:exp1}
\renewcommand{\arraystretch}{0.6}
{\bf{H}} = \left[ {\begin{array}{*{20}{c}}
1&1&1&{}&{}&{}&{}&{}&{}&{}&{}&{}&{}&{}&{}&{}\\
0&0&0&1&1&1&{}&{}&{}&{}&{}&{}&{}&{}&{}&{}\\
0&0&0&0&0&0&1&1&1&{}&{}&{}&{}&{}&{}&{}\\
0&0&0&0&0&0&0&0&0&1&1&1&{}&{}&{}&{}\\
0&0&0&0&0&0&0&0&0&0&0&0&1&1&1&{}\\
0&0&0&0&0&0&0&0&0&0&0&0&0&0&0& {\smash{\scalebox{0.8}{$\ddots$}}} \\
0&1&0&0&0&0&0&0&0&0&0&0&0&0&0&{}\\
0&0&0&0&1&0&0&0&0&0&0&0&0&0&0&{}\\
1&0&0&0&0&0&0&1&0&0&0&0&0&0&0&{}\\
1&0&0&1&0&0&0&0&0&0&1&0&0&0&0&{}\\
0&0&0&1&0&0&1&0&0&0&0&0&0&1&0&{}\\
0&1&0&0&0&0&1&0&0&1&0&0&0&0&0& {\smash{\scalebox{0.8}{$\ddots$}}} \\
1&0&0&0&1&0&0&0&0&1&0&0&1&0&0&{}\\
0&1&0&1&0&0&0&1&0&0&0&0&1&0&0&{}\\
{}&{}&{}&0&1&0&1&0&0&0&1&0&0&0&0&{}\\
{}&{}&{}&{}&{}&{}&0&1&0&1&0&0&0&1&0& {\smash{\scalebox{0.8}{$\ddots$}}} \\
{}&{}&{}&{}&{}&{}&{}&{}&{}&0&1&0&1&0&0&{}\\
{}&{}&{}&{}&{}&{}&{}&{}&{}&{}&{}&{}&0&1&0&{}
\end{array}} \right].
\end{equation}
We see from \eqref{eq:exp1} that the first information bit ${v_0}^{(0)}$ at time unit 0 is checked by four parity bits and that no other code bits appear in more than one of these check equations.  Similarly, the second information bit ${v_0}^{(1)}$ at time 0 is also checked by four parity bits and no other code bits appear in more than one of these check equations.  Hence this code contains $J=4$ orthogonal checks on each information bit and is therefore capable of correcting any pattern of two or fewer channel errors within a span of $\nu= n(m+1)$ received bits, equal to one constraint length.$\hfill\square$

The process of \emph{lifting} a protograph to produce a more powerful code involves replacing each edge of the graph, equivalently each 1 in the \emph{base matrix} $\bf{H}$, by a permutation matrix of size $M$ \cite{Thorpe2003Proto,David2015TIT}. We now examine the graph properties of these high-rate CSOCs, since we are interested in finding high-rate convolutional protographs from which more powerful SC-LDPC codes can be formed.  First, we note that each set of $n$ columns in \eqref{eq:HmatrixCSOC} contains exactly one of weight one, corresponding to a VN of degree one in the associated graph. Also, since each information bit must be checked by exactly $J$ parity bits, the first $n-1$ columns in each set all have weight $J$, corresponding to VNs of degree $J$. Finally, after the initial $m$ rows of the $\bf{H}$ matrix, we see that each row has weight $(n-1)J + 1$, corresponding to check nodes (CNs) of degree $(n-1)J + 1$. Hence, the corresponding convolutional protograph is CN-regular, but contains two different VN degrees. Further, (i) for modest values of $J$, the graph is low-density, (ii) due to the orthogonality constraint, the graph has girth at least six, and (iii) since each generator polynomial contains exactly $J$ ones, the minimum free distance is $d_{\rm{free}} = J+1$.  We conclude that, other than for the degree one VNs, CSOC graphs have attractive properties for potential use as convolutional protographs.

CSOCs also have the advantage of allowing a variety of decoding methods with different tradeoffs between performance, complexity, and decoding latency.  At the low complexity end, hard decision majority-logic decoding can achieve modest performance with a \emph{decoding latency} of $\eta = \nu$, \emph{i.e.}, one constraint length.  Performance can be improved by applying soft decision APP decoding, again with latency $\eta = \nu$.  Treating the CSOC as a convolutional protograph and applying the lifting procedure\footnote{Protograph lifting to produce a more powerful code, which involves replacing each edge of a protograph, equivalently each 1 in $\bf{H}$, by a permutation matrix of size $M$, was originally introduced by Thorpe for block codes \cite{Thorpe2003Proto}.} (see \cite{David2015TIT}) along with iterative BP-based SWD, as we propose in this paper, can produce still better performance with a latency of $\eta \approx 4\nu$. Within the lifting scenario (which preserves the girth and free distance, see Theorem 1), various levels of performance/complexity tradeoffs can also be achieved.  For example, lifting with circulant permutation matrices simplifies the encoder and decoder implementation.  Also, using the same lifting matrices at each time unit results in a time-invariant code, further simplifying the implementation.  Finally, performance can be maximized by using randomly chosen permutation matrices to lift each edge of the graph.

A considerable literature exists on various CSOC constructions \cite{Messay1963,Robinson1967TIT,Wu1975TCOM_p1,Wu1976TCOM_p2,Klieber1970TIT} along with related constructions like \emph{convolutional self-doubly orthogonal codes} (CSO$^2$Cs) \cite{Haccoun2005TCOM}. Rate $R=(n-1)/n$ CSOCs with $n$ as large as 50 \cite{Wu1976TCOM_p2} have been published for values of $J=3$ and higher. As long as $J$, which determines the information VN degrees, is not too large, these codes can be used as high-rate convolutional protographs for constructing SC-LDPC codes. For larger values of $J$, say $J > 6$, the code has a larger free distance, but the graph density increases, thereby increasing the iterative decoding complexity. Such codes would fall in the category of \emph{spatially coupled moderate-density parity-check} (SC-MDPC) codes \cite{Ouzan2009MDPC}. It is also worth noting that, since for lifted protographs the constraint length $\nu= M(m+1)n$, when lifting is used to achieve stronger (larger constraint length) codes, most SC-LDPC codes found in the literature normally employ a short memory $m$ along with a large lifting factor $M$ \cite{David2015TIT,Mo2020TCOM,Dolecek2019TCOM}. CSOCs, on the other hand, typically have large values of $m$, so more modest values of $M$ can be used to generate strong codes. For example, the $R=49/50$ code with $J=3$ constructed in \cite{Wu1976TCOM_p2} has $m=534$.  So a protograph lifting factor of only $M=4$ gives a constraint length of $\nu = M(m+1)n = 107,000$.

As noted above, the primary disadvantage of systematic CSOC-based convolutional protographs is their irregularity, due to the presence of degree one VNs, which negatively affects the performance of BP decoding.  We now discuss how CSOC parity-check matrices can be modified to produce regular non-systematic $\bf{H}$ matrices with fixed VN degree $J$ that still allow for systematic encoding.
\section{NON-SYSTEMATIC SELF-ORTHOGONAL CODES}
In order to produce a fully regular protograph without any degree one VNs, we can modify the $\bf{H}$ matrix by discarding all the degree one (parity) columns.  This reduces the code rate and results in a non-systematic parity-check matrix with a regular degree profile.  For example, if the original systematic CSOC had rate $R = (n-1)/n$ and $J$ orthogonal parity checks, the modified non-systematic CSOC has rate $R = (n-2)/(n-1)$, the same value of $J$, fixed VN degree $J$, and fixed CN degree $(n-1)J$.  The non-systematic polynomial generator matrix is
\begin{equation}
{{\bf{G}}_{{\rm{ns}}}}\left( {D} \right){\rm{ }} = \left[ {{{\bf{J}}_{{\rm{n}} - {\rm{2}}}}:{{\bf{N}}^{\rm{T}}}\left( {D} \right)} \right],
\end{equation}
where ${{\bf{J}}_{\rm{n}-2}} = {{\rm{g}}^{({\rm{n}} - 1)}}\left( {D} \right){{\bf{I}}_{{\rm{n}} - 2}}$ and ${\bf{N}}\left( {\rm{D}} \right) = \left[ {{{\rm{g}}^{({\rm{1}})}}\left( {D} \right),{\rm{ }} \ldots ,{\rm{ }}{{\rm{g}}^{({\rm{n}} - 2)}}\left( {D} \right)} \right]$, and fixed CN degree $(n-1)J$.  The non-systematic polynomial parity-check matrix is
\begin{equation}
{{\bf{H}}_{{\rm{ns}}}}\left( {D} \right){\rm{ }} = {\bf{P}}\left( {D} \right),
\end{equation}
and the non-systematic time domain parity-check matrix is
\begin{equation}
{{\bf{H}}_{{\rm{ns}}}} = \left[ {\begin{array}{*{20}{c}}
   {g_0^{\left( 1 \right)}} & {g_0^{\left( 2 \right)}} &  \cdots  & {g_0^{\left( {n - 1} \right)}} & {} & {} & {} & {} & {} & {} & {} & {} & {}  \\
   {g_1^{\left( 1 \right)}} & {g_1^{\left( 2 \right)}} &  \cdots  & {g_1^{\left( {n - 1} \right)}} & {g_0^{\left( 1 \right)}} & {g_0^{\left( 2 \right)}} &  \cdots  & {g_0^{\left( {n - 1} \right)}} & {} & {} & {} & {} & {}  \\
    \vdots  &  \vdots  & {} &  \vdots  & {g_1^{\left( 1 \right)}} & {g_1^{\left( 2 \right)}} &  \cdots  & {g_1^{\left( {n - 1} \right)}} & {g_0^{\left( 1 \right)}} & {g_0^{\left( 2 \right)}} &  \cdots  & {g_0^{\left( {n - 1} \right)}} & {}  \\
   {g_m^{\left( 1 \right)}} & {g_m^{\left( 2 \right)}} &  \cdots  & {g_m^{\left( {n - 1} \right)}} &  \vdots  &  \vdots  & {} &  \vdots  & {g_1^{\left( 1 \right)}} & {g_1^{\left( 2 \right)}} &  \cdots  & {g_1^{\left( {n - 1} \right)}} &  \ddots   \\
   {} & {} & {} & {} & {g_m^{\left( 1 \right)}} & {g_m^{\left( 2 \right)}} &  \cdots  & {g_m^{\left( {n - 1} \right)}} &  \vdots  &  \vdots  & {} &  \vdots  & {}  \\
   {} & {} & {} & {} & {} & {} & {} & {} & {g_m^{\left( 1 \right)}} & {g_m^{\left( 2 \right)}} &  \cdots  & {g_m^{\left( {n - 1} \right)}} & {}  \\
   {} & {} & {} & {} & {} & {} & {} & {} & {} & {} & {} & {} &  \ddots   \\
\end{array}} \right],
\end{equation}
where we note that, unlike the systematic time domain parity-check matrix $\bf{H}$, the non-systematic parity-check matrix $\bf{H}_{\rm{ns}}$ does not contain any columns with a single one (degree one VNs).

If systematic encoding is still desired, the equivalent recursive systematic convolutional (RSC) polynomial generator matrix
\begin{equation}
{{\bf{G}}_{{\rm{sys}}}}\left( {D} \right){\rm{ }} = {\rm{ }}\left[ {{{\bf{I}}_{{\rm{n}} - {\rm{2}}}}:{{\bf{S}}^{\rm{T}}}\left( {D} \right)} \right],
\end{equation}
where ${\bf{S}}\left( {D} \right){\rm{ }} = {\rm{ }}\left[ {{{\rm{g}}^{({\rm{1}})}}\left( {D} \right)/{{\rm{g}}^{({\rm{n}} - {\rm{1}})}}\left( {\rm{D}} \right),{\rm{ }} \ldots ,{\rm{ }}{{\rm{g}}^{({\rm{n}} - {\rm{2}})}}\left( {\rm{D}} \right)/{{\rm{g}}^{({\rm{n}} - {\rm{1}})}}\left( {\rm{D}} \right)} \right]$, can be used. In this formulation, the first $n-2$ code bits at each time unit (corresponding to the first $n-2$ columns in each set of columns of $\bf{H}_{\rm{ns}}$) are the information bits, while the final code bit at each time unit (corresponding to the final column in each set of columns of $\bf{H}_{\rm{ns}}$) is the parity bit.

It is important to note that, although the rate has been reduced, these non-systematic CSOCs still have all the desirable properties of systematic CSOCs; (i) a low-density protograph, (ii) girth at least six, and (iii) minimum free distance $J+1$. And, even though the modification required to produce non-systematic CSOCs reduces the rate, high rate codes are still available.  For example, if the modification is applied to the $R=49/50$, $J=3$, $m=534$ systematic CSOC referenced above, we obtain a rate $R=48/49$, $J=3$, $m= 534$ non-systematic CSOC.  These fully regular non-systematic CSOCs can provide good performance with iterative BP-based SWD.

\textbf{EXAMPLE 1 (cont.)}: The modified parity-check matrix produces an $R=1/2$, $m=13$, $J= 4$, non-systematic CSOC with non-systematic polynomial generator matrix ${{\bf{G}}_{{\rm{ns}}}}\left( {\rm{D}} \right){\rm{ }} = \left[ {{{\rm{g}}^{({\rm{2}})}}\left( {\rm{D}} \right),{\rm{ }}{{\rm{g}}^{({\rm{1}})}}\left( {\rm{D}} \right)} \right]$, equivalent RSC generator matrix
\begin{equation}
\begin{aligned}
{{\bf{G}}_{{\rm{sys}}}}\left( {\rm{D}} \right){\rm{ }} &= {\rm{ }}\left[ {\begin{array}{*{20}{c}}
   {\rm{1}} & {{{\rm{g}}^{({\rm{1}})}}\left( {\rm{D}} \right)/{{\rm{g}}^{({\rm{2}})}}\left( {\rm{D}} \right)}  \\
\end{array}} \right]{\rm{ }} \\&= {\rm{ }}\left[ {\begin{array}{*{20}{c}}
   1 & {\left( {{\rm{1}} + {{\rm{D}}^{\rm{8}}} + {{\rm{D}}^{\rm{9}}} + {{\rm{D}}^{{\rm{12}}}}} \right)/\left( {{\rm{1}} + {{\rm{D}}^{\rm{6}}} + {{\rm{D}}^{{\rm{11}}}} + {{\rm{D}}^{{\rm{13}}}}} \right)}  \\
\end{array}} \right],
\end{aligned}
\end{equation}
non-systematic polynomial parity-check matrix
\begin{equation}
\begin{aligned}
{{\bf{H}}_{{\rm{ns}}}}\left( {\rm{D}} \right) &= \left[ {{{\rm{g}}^{({\rm{1}})}}\left( {\rm{D}} \right),{{\rm{g}}^{({\rm{2}})}}\left( {\rm{D}} \right)} \right] \\&= \left[ {{\rm{1}} + {{\rm{D}}^{\rm{8}}} + {{\rm{D}}^{\rm{9}}} + {{\rm{D}}^{{\rm{12}}}},{\rm{1}} + {{\rm{D}}^{\rm{6}}} + {{\rm{D}}^{{\rm{11}}}} + {{\rm{D}}^{{\rm{13}}}}} \right],
\end{aligned}
\end{equation}
and non-systematic time-domain parity-check matrix
\begin{equation}\label{eq:Hns_m13J4}
\renewcommand{\arraystretch}{0.5}
{{\bf{H}}_{{\rm{ns}}}} = \left[ {\begin{array}{*{20}{c}}
   1 & 1 & {} & {} & {} & {} & {} & {} & {} & {} & {} & {} & {} & {} & {}  \\
   0 & 0 & 1 & 1 & {} & {} & {} & {} & {} & {} & {} & {} & {} & {} & {}  \\
   0 & 0 & 0 & 0 & 1 & 1 & {} & {} & {} & {} & {} & {} & {} & {} & {}  \\
   0 & 0 & 0 & 0 & 0 & 0 & 1 & 1 & {} & {} & {} & {} & {} & {} & {}  \\
   0 & 0 & 0 & 0 & 0 & 0 & 0 & 0 & 1 & 1 & {} & {} & {} & {} & {}  \\
   0 & 0 & 0 & 0 & 0 & 0 & 0 & 0 & 0 & 0 & 1 & 1 & {} & {} & {}  \\
   0 & 1 & 0 & 0 & 0 & 0 & 0 & 0 & 0 & 0 & 0 & 0 & 1 & 1 & {}  \\
   0 & 0 & 0 & 1 & 0 & 0 & 0 & 0 & 0 & 0 & 0 & 0 & 0 & 0 &  {\smash{\scalebox{0.8}{$\ddots$}}}   \\
   1 & 0 & 0 & 0 & 0 & 1 & 0 & 0 & 0 & 0 & 0 & 0 & 0 & 0 & {}  \\
   1 & 0 & 1 & 0 & 0 & 0 & 0 & 1 & 0 & 0 & 0 & 0 & 0 & 0 & {}  \\
   0 & 0 & 1 & 0 & 1 & 0 & 0 & 0 & 0 & 1 & 0 & 0 & 0 & 0 & {}  \\
   0 & 1 & 0 & 0 & 1 & 0 & 1 & 0 & 0 & 0 & 0 & 1 & 0 & 0 & {}  \\
   1 & 0 & 0 & 1 & 0 & 0 & 1 & 0 & 1 & 0 & 0 & 0 & 0 & 1 &  {\smash{\scalebox{0.8}{$\ddots$}}}   \\
   0 & 1 & 1 & 0 & 0 & 1 & 0 & 0 & 1 & 0 & 1 & 0 & 0 & 0 & {}  \\
   {} & {} & 0 & 1 & 1 & 0 & 0 & 1 & 0 & 0 & 1 & 0 & 1 & 0 & {}  \\
   {} & {} & {} & {} & 0 & 1 & 1 & 0 & 0 & 1 & 0 & 0 & 1 & 0 & {}  \\
   {} & {} & {} & {} & {} & {} & 0 & 1 & 1 & 0 & 0 & 1 & 0 & 0 & {}  \\
   {} & {} & {} & {} & {} & {} & {} & {} & 0 & 1 & 1 & 0 & 0 & 1 & {}  \\
   {} & {} & {} & {} & {} & {} & {} & {} & {} & {} & 0 & 1 & 1 & 0 & {}  \\
   {} & {} & {} & {} & {} & {} & {} & {} & {} & {} & {} & {} & 0 & 1 & {}  \\
   {} & {} & {} & {} & {} & {} & {} & {} & {} & {} & {} & {} & {} & {} &  {\smash{\scalebox{0.8}{$\ddots$}}}   \\
\end{array}} \right].
\end{equation}
A lifted graph corresponding to the non-systematic $\bf{H}_{\rm{ns}}$ matrix of \eqref{eq:Hns_m13J4} represents a rate $R=1/2$, $m=13$, $J=4$, $(4,8)$-regular SC-LDPC code. $\hfill\square$

\begin{figure*}
    \centering
    \includegraphics[width=0.8\textwidth]{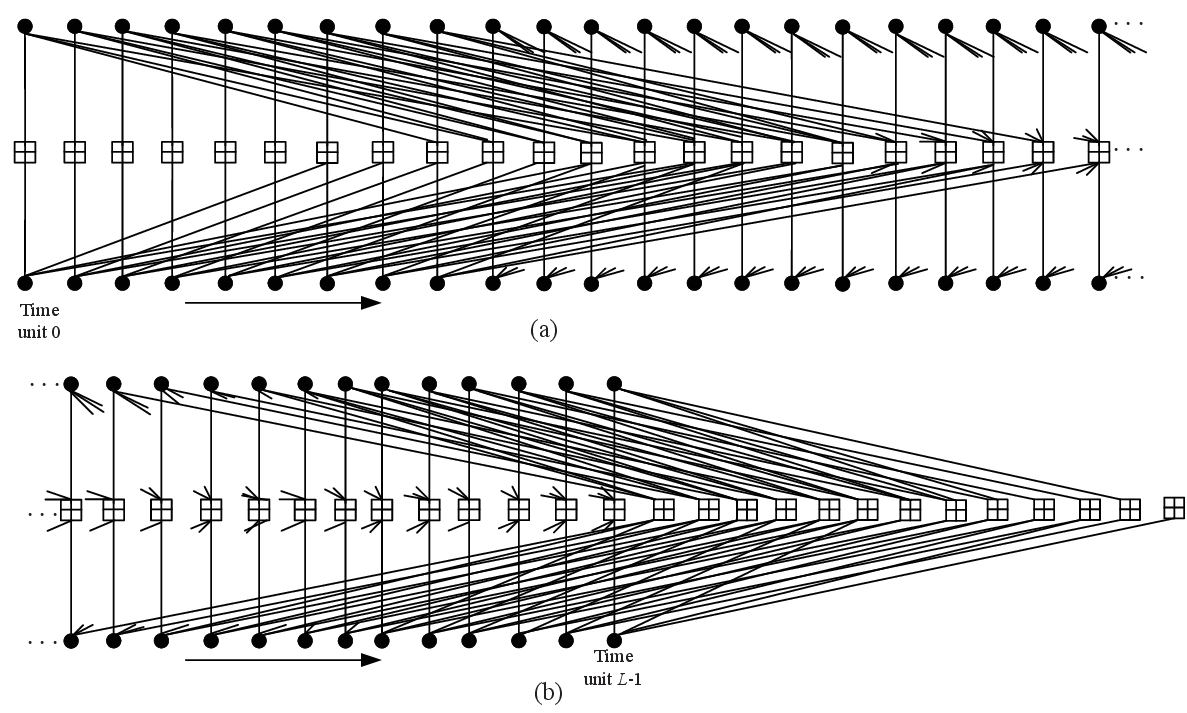}
\caption{The (a) unterminated and (b) terminated convolutional protograph corresponding to the parity-check matrix $\bf{H}_{\rm{ns}}$ of \eqref{eq:Hns_m13J4}. }\label{fig:J4m13NS}
\end{figure*}

Figure \ref{fig:J4m13NS} shows the convolutional protograph, representing a \emph{connected chain} of time units, corresponding to the non-systematic CSOC whose parity-check matrix is given by \eqref{eq:Hns_m13J4}. In part (a) the protograph is \emph{unterminated}, representing continuous transmission, with each time unit representing one information bit and the transmission of 2 code bits, where $R = 1/2$ is the rate of the unterminated code. Part (b) shows the protograph in \emph{terminated} mode, where information bits are sent over the first $L$ time units only, $m = 13$ additional time units are needed to terminate the graph, and $L$ is called the \emph{termination length}.

Protograph termination results in a rate loss, since the termination part of the graph contains no information bits.
In the general case, for an unterminated code of rate $R = k/n$, the rate of the terminated code is given by $R_t = 1 - [(L+m)/L](1-R)$, which, for fixed $m$, approaches $R$ (no rate loss), the rate of the unterminated code, for large $L$.  Also, we note that, if a lifting factor $M$ is applied to the graph to generate an SC-LDPC code, $Mn$ code bits (representing $Mk$ information bits) are transmitted at each time unit.

A BP-based SWD for such an SC-LDPC code typically performs decoding iterations over a window of size $W$ constraint lengths, where the constraint length $\emph{v}= Mn(m+1)$ code bits. Each time unit a \emph{target block} of $Mn$ code bits ($Mk$ information bits) is decoded, after which the window shifts one time unit to the right and begins decoding the next target block. The decoding latency of a SWD is $ \eta = W\emph{v}= WMn(m+1)$ code bits.  The operation of a SWD is illustrated in Figure \ref{fig:SWD}.

\begin{figure*}
    \centering
    \includegraphics[width=0.8\textwidth]{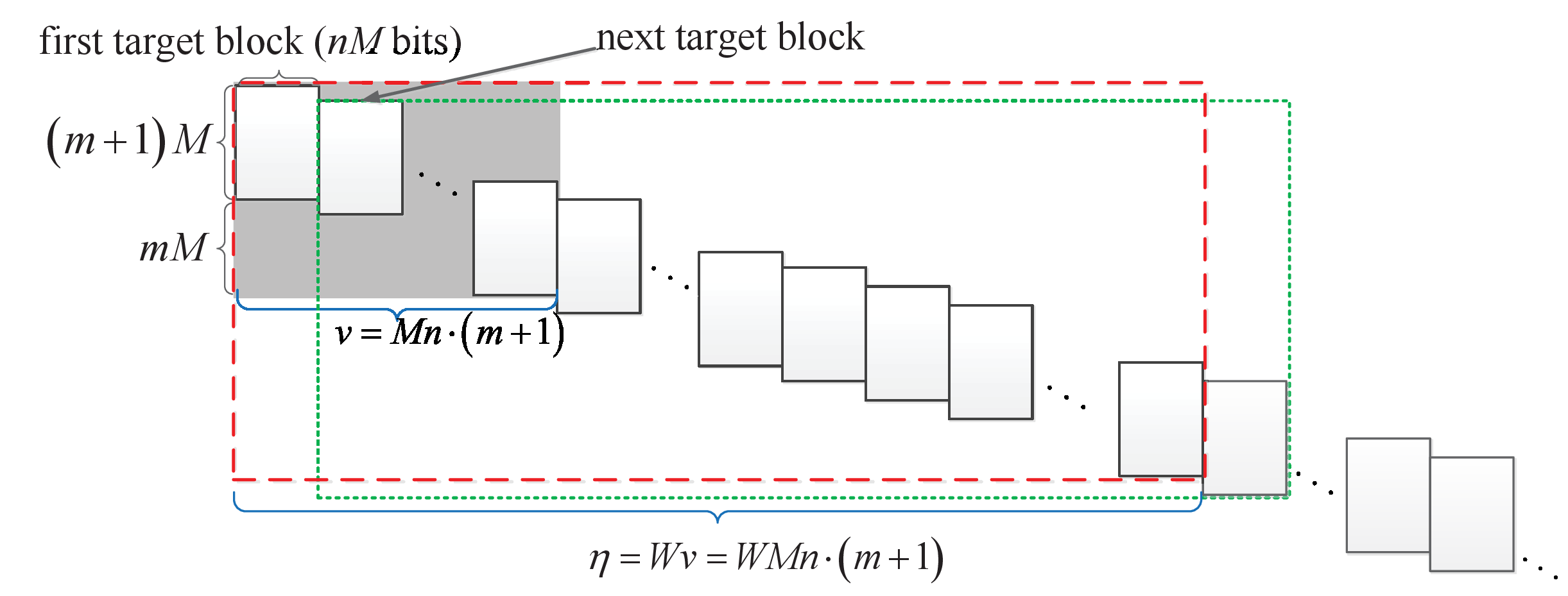}
\caption{Block diagram of a SWD with window size $W$ constraint lengths for an SC-LDPC code.}\label{fig:SWD}
\end{figure*}

\textbf{EXAMPLE 2}: Consider the rate $R=3/4$, $m=19$, $J=4$ systematic CSOC with generator polynomials ${{\rm{g}}^{({\rm{1}})}}\left( {\rm{D}} \right){\rm{ }} = {\rm{ 1 }} + {\rm{ }}{{\rm{D}}^{\rm{6}}} + {\rm{ }}{{\rm{D}}^{{\rm{11}}}} + {\rm{ }}{{\rm{D}}^{{\rm{13}}}}$, ${{\rm{g}}^{({\rm{2}})}}\left( {\rm{D}} \right){\rm{ }} = {\rm{ 1 }} + {\rm{ }}{{\rm{D}}^{\rm{8}}} + {\rm{ }}{{\rm{D}}^{{\rm{17}}}} + {\rm{ }}{{\rm{D}}^{{\rm{18}}}}$, and ${{\rm{g}}^{({\rm{3}})}}\left( {\rm{D}} \right){\rm{ }} = {\rm{ 1 }} + {\rm{ }}{{\rm{D}}^{\rm{3}}} + {\rm{ }}{{\rm{D}}^{{\rm{15}}}} + {\rm{ }}{{\rm{D}}^{{\rm{19}}}}$ \cite{Robinson1967TIT}. Following the same procedure as in Example 1, we can form a rate $R=2/3$, $m=19$, $J=4$ non-systematic CSOC with non-systematic polynomial generator matrix ${{\bf{G}}_{{\rm{ns}}}}\left( {\rm{D}} \right){\rm{ }} = \left[ {{{\bf{J}}_{\rm{2}}}:{{\bf{N}}^{\rm{T}}}\left( {\rm{D}} \right)} \right]$, where ${{\bf{J}}_{\rm{2}}} = {\rm{ }}{{\rm{g}}^{({\rm{3}})}}\left( {\rm{D}} \right){{\bf{I}}_{\rm{2}}}$ and ${\bf{N}}\left( {\rm{D}} \right){\rm{ }} = {\rm{ }}\left[ {{{\rm{g}}^{({\rm{1}})}}\left( {\rm{D}} \right),{\rm{ }}{{\rm{g}}^{({\rm{2}})}}\left( {\rm{D}} \right)} \right]$, equivalent RSC generator matrix
\begin{equation}
{{\bf{G}}_{{\rm{sys}}}}\left( {\rm{D}} \right){\rm{ }} = {\rm{ }}\left[ {{{\bf{I}}_{\rm{2}}}:{{\bf{S}}^{\rm{T}}}\left( {\rm{D}} \right)} \right],
\end{equation}
where
\begin{equation}
\begin{aligned}
{\bf{S}}\left( {\rm{D}} \right)&=\left[ {\frac{{{{\rm{g}}^{({\rm{1}})}}\left( {\rm{D}} \right)}}{{{{\rm{g}}^{({\rm{3}})}}\left( {\rm{D}} \right)}},\frac{{{{\rm{g}}^{({\rm{2}})}}\left( {\rm{D}} \right)}}{{{{\rm{g}}^{({\rm{3}})}}\left( {\rm{D}} \right)}}} \right]\\
&=\left[ {\frac{{\left( {{\rm{1}} + {{\rm{D}}^{\rm{6}}} + {{\rm{D}}^{{\rm{11}}}} + {{\rm{D}}^{{\rm{13}}}}} \right)}}{{\left( {{\rm{1}} + {{\rm{D}}^{\rm{3}}} + {{\rm{D}}^{{\rm{15}}}} + {{\rm{D}}^{{\rm{19}}}}} \right)}},{\rm{ }}\frac{{\left( {{\rm{1}} + {{\rm{D}}^{\rm{8}}} + {{\rm{D}}^{{\rm{17}}}} + {{\rm{D}}^{{\rm{18}}}}} \right)}}{{\left( {{\rm{1}} + {{\rm{D}}^{\rm{3}}} + {{\rm{D}}^{{\rm{15}}}} + {{\rm{D}}^{{\rm{19}}}}} \right)}}} \right],
\end{aligned}
\end{equation}
non-systematic polynomial parity-check matrix
\begin{equation}
\begin{array}{l}
 {{\bf{H}}_{{\rm{ns}}}}\left( {\rm{D}} \right){\rm{ }} = {\rm{ }}\left[ {{{\rm{g}}^{({\rm{1}})}}\left( {\rm{D}} \right),{\rm{ }}{{\rm{g}}^{({\rm{2}})}}\left( {\rm{D}} \right),{\rm{ }}{{\rm{g}}^{({\rm{3}})}}\left( {\rm{D}} \right)} \right] \\
  = {\rm{ }}\left[ {{\rm{1}} + {{\rm{D}}^{\rm{6}}} + {{\rm{D}}^{{\rm{11}}}} + {{\rm{D}}^{{\rm{13}}}},{\rm{ 1}} + {{\rm{D}}^{\rm{8}}} + {{\rm{D}}^{{\rm{17}}}} + {{\rm{D}}^{{\rm{18}}}},{\rm{ 1}} + {{\rm{D}}^{\rm{3}}} + {{\rm{D}}^{{\rm{15}}}} + {{\rm{D}}^{{\rm{19}}}}} \right] \\
 \end{array}
\end{equation}
and corresponding non-systematic time-domain parity-check matrix is
\begin{equation}
\label{eq:HnsJ4m19}
\renewcommand{\arraystretch}{0.5}
{{\bf{H}}_{{\rm{ns}}}} = \left[ {\begin{array}{*{20}{c}}
   1 & 1 & 1 & {} & {} & {} & {} & {} & {} & {} & {} & {} & {}  \\
   0 & 0 & 0 & 1 & 1 & 1 & {} & {} & {} & {} & {} & {} & {}  \\
   0 & 0 & 0 & 0 & 0 & 0 & 1 & 1 & 1 & {} & {} & {} & {}  \\
   0 & 0 & 1 & 0 & 0 & 0 & 0 & 0 & 0 & 1 & 1 & 1 & {}  \\
   0 & 0 & 0 & 0 & 0 & 1 & 0 & 0 & 0 & 0 & 0 & 0 &  {\smash{\scalebox{0.8}{$\ddots$}}}   \\
   0 & 0 & 0 & 0 & 0 & 0 & 0 & 0 & 1 & 0 & 0 & 0 & {}  \\
   1 & 0 & 0 & 0 & 0 & 0 & 0 & 0 & 0 & 0 & 0 & 1 & {}  \\
   0 & 0 & 0 & 1 & 0 & 0 & 0 & 0 & 0 & 0 & 0 & 0 & {}  \\
   0 & 1 & 0 & 0 & 0 & 0 & 1 & 0 & 0 & 0 & 0 & 0 & {}  \\
   0 & 0 & 0 & 0 & 1 & 0 & 0 & 0 & 0 & 1 & 0 & 0 & {}  \\
   0 & 0 & 0 & 0 & 0 & 0 & 0 & 1 & 0 & 0 & 0 & 0 & {}  \\
   1 & 0 & 0 & 0 & 0 & 0 & 0 & 0 & 0 & 0 & 1 & 0 &  {\smash{\scalebox{0.8}{$\ddots$}}}   \\
   0 & 0 & 0 & 1 & 0 & 0 & 0 & 0 & 0 & 0 & 0 & 0 & {}  \\
   1 & 0 & 0 & 0 & 0 & 0 & 1 & 0 & 0 & 0 & 0 & 0 & {}  \\
   0 & 0 & 0 & 1 & 0 & 0 & 0 & 0 & 0 & 1 & 0 & 0 & {}  \\
   0 & 0 & 1 & 0 & 0 & 0 & 1 & 0 & 0 & 0 & 0 & 0 & {}  \\
   0 & 0 & 0 & 0 & 0 & 1 & 0 & 0 & 0 & 1 & 0 & 0 & {}  \\
   0 & 1 & 0 & 0 & 0 & 0 & 0 & 0 & 1 & 0 & 0 & 0 & {}  \\
   0 & 1 & 0 & 0 & 1 & 0 & 0 & 0 & 0 & 0 & 0 & 1 & {}  \\
   0 & 0 & 1 & 0 & 1 & 0 & 0 & 1 & 0 & 0 & 0 & 0 & {}  \\
   {} & {} & {} & 0 & 0 & 1 & 0 & 1 & 0 & 0 & 1 & 0 & {}  \\
   {} & {} & {} & {} & {} & {} & 0 & 0 & 1 & 0 & 1 & 0 &  {\smash{\scalebox{0.8}{$\ddots$}}}   \\
   {} & {} & {} & {} & {} & {} & {} & {} & {} & 0 & 0 & 1 & {}  \\
\end{array}} \right].
\end{equation}
A lifted graph corresponding to the non-systematic $\bf{H}_{\rm{ns}}$ matrix of \eqref{eq:HnsJ4m19} then represents a rate $R=2/3$, $m=19$, $J=4$, $(4,12)$-regular SC-LDPC code.$\hfill\square$

Comparing the $R=2/3$ systematic CSOC of Example 1 to the $R=2/3$ non-systematic CSOC of Example 2, we see that the properties are equivalent, except that the non-systematic CSOC requires a somewhat larger memory.  In the context of an SC-LDPC code, this would mean a slightly larger rate loss for the non-systematic CSOC in exchange for a graph structure without any degree one VNs.

As we noted above, these non-systematic CSOC protographs have girth at least six and free distance $J + 1$.  We now show that these desirable properties also hold for larger graphs lifted using permutation matrices.

\begin{theorem}
Any $(J, (n-1)J)$-regular non-systematic CSOC protograph constructed as above from a rate $R = (n-1)/n$ systematic CSOC protograph and lifted using permutation matrices will have (i) girth at least six and (ii) free distance at least $J + 1$.
\end{theorem}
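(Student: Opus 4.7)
The plan is to establish part (i) first and then use it to derive part (ii). For (i), I argue in two stages. At the protograph level, $\mathbf{H}_{\mathrm{ns}}$ is obtained from the systematic parity-check matrix $\mathbf{H}$ by deleting the degree-one parity columns; since removing variable nodes can only delete, not create, cycles, and since $\mathbf{H}$ itself contains no 4-cycles (the orthogonality condition noted just after \eqref{eq:HmatrixCSOC}), the protograph $\mathbf{H}_{\mathrm{ns}}$ has girth at least six. For the lifting stage, any 4-cycle $\tilde v\to\tilde c_1\to\tilde v_1\to\tilde c_2\to\tilde v$ in the lifted Tanner graph projects to a closed walk $v\to c_1\to v_1\to c_2\to v$ in the protograph. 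If this projected walk is a genuine 4-cycle it contradicts the girth bound on $\mathbf{H}_{\mathrm{ns}}$. The remaining case is that the walk retraces a protograph edge ($v_1=v$ or $c_1=c_2$); but permutation lifting connects each copy of a base variable node to exactly one copy of each of its protograph neighbors (the defining perfect-matching structure), so retracing a protograph edge forces retracing lifted nodes, violating the distinct-node requirement of a 4-cycle.

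For (ii), I exploit that, by part (i) together with the fact that permutation lifting preserves the degree profile, the lifted Tanner graph is $(J,(n-1)J)$-regular with girth at least six. Let $\tilde{\mathbf v}$ be any nonzero finite-support codeword, let $T$ be its support, and set $s=|T|$. The parity-check equations force each check node in the neighborhood $N(T)$ to have an even (hence at least two) number of its neighbors in $T$, so the handshake identity $\sum_{c\in N(T)} d_c^{T}=Js$ gives $|N(T)|\le Js/2$, where $d_c^{T}$ denotes the number of neighbors of check $c$ that lie in $T$. On the other hand, girth at least six means that no two variable nodes share more than one common check neighbor, which yields $\sum_{c\in N(T)}\binom{d_c^{T}}{2}\le\binom{s}{2}$. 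Convexity of $\binom{x}{2}$ combined with Jensen's inequality then gives $|N(T)|\ge J^2 s/(s+J-1)$, and combining this with $|N(T)|\le Js/2$ produces $2J\le s+J-1$, i.e.\ $s\ge J+1$.

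The step I expect to be the main obstacle is the lifting argument in part (i): one must be careful to cover every way in which a closed 4-walk in the lifted graph could fail to be a genuine 4-cycle, and to verify that the single-incident-copy property of permutation lifting rules out each such degenerate case. Once (i) is settled, the rest is routine: the preservation of $(J,(n-1)J)$-regularity under permutation lifting is immediate, and the ``regularity plus girth six'' counting argument that underlies (ii) is standard for bipartite LDPC graphs and transfers without change from the protograph to its lift.
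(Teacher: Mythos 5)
Your proof is correct, and your treatment of part (i) matches the paper's in spirit — both project a hypothetical 4-cycle in the lifted graph down to the base matrix $\mathbf{H}_{\mathrm{ns}}$ and contradict its girth-6 property; you are somewhat more careful to dispose of the degenerate closed walks, which the paper handles implicitly via the block/row structure of the lifting. Part (ii), however, takes a genuinely different route. You give a global counting argument in the style of the classical Tanner minimum-distance bound: pairing the handshake inequality $\left|N(T)\right|\le Js/2$ (from checks seeing at least two support variables) against the girth-6 constraint $\sum_{c}\binom{d_c^T}{2}\le\binom{s}{2}$, and extracting $s\ge J+1$ by convexity. The paper instead makes a local pigeonhole argument: fix one column $\lambda$ of $\mathbf{\Lambda}_{\mathrm{ns}}$ with its $J$ ones; any set of fewer than $J$ additional columns summing to cancel $\lambda$ would force some column to share two row positions with $\lambda$, creating a 4-cycle, a contradiction with (i), so at least $J+1$ columns are needed. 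The paper's version is shorter and exploits only left-regularity of the column whose support you start from, while your version delivers the same constant but establishes the stronger stopping-set/minimum-distance statement for the whole $(J,(n-1)J)$-regular girth-6 graph in one shot; both are valid, and both rely on exactly the same two ingredients (left degree $J$, girth at least six).
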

\begin{proof}
$\rm{(i)}$ Let ${\bf{\Lambda}}_{\rm{ns}}$ represent a lifted version of the base parity-check matrix ${\bf{H}}_{\rm{ns}}$. If ${\bf{\Lambda}}_{\rm{ns}}$ contains a 4-cycle, the 4 non-zero elements in ${\bf{\Lambda}}_{\rm{ns}}$ that constitute the 4-cycle must correspond to 4 non-zero elements in ${\bf{H}}_{\rm{ns}}$.  Since the 4-cycle consists of two pairs of non-zero elements in ${\bf{\Lambda}}_{\rm{ns}}$, each of which must belong to the same row, the corresponding pairs of non-zero elements in ${\bf{H}}_{\rm{ns}}$ must also belong to the same row.  But these 4 non-zero elements of ${\bf{H}}_{\rm{ns}}$ would then constitute a 4-cycle, contradicting the fact that the associated protograph has girth 6.  Hence it follows that the lifted graph corresponding to ${\bf{\Lambda}}_{\rm{ns}}$ must have girth at least six.  (See also \cite{Mitchell2014TIT}.)

$\rm{(ii)}$ The free distance of the code represented by the lifted parity-check matrix ${\bf{\Lambda}}_{\rm{ns}}$ is equal to the minimum number of columns  that can add to zero.  Let $\bf{\lambda}$ represent an arbitrary column of ${\bf{\Lambda}}_{\rm{ns}}$. Since, by construction, each column of ${\bf{\Lambda}}_{\rm{ns}}$ has exactly $J$ ones, in order for fewer than $J$ additional columns to cancel the $J$ ones in $\bf{\lambda}$, it would be necessary for at least one column to cancel more than a single one.  But this would imply that the lifted matrix ${\bf{\Lambda}}_{\rm{ns}}$ contains a 4-cycle, which contradicts part $\rm{(i)}$ above.  Hence the free distance must be at least $J + 1$.
\end{proof}

\section{NUMERICAL RESULTS}
We now present some simulation results highlighting the performance of CSOC-based SC-LDPC codes as well as a protograph EXIT (PEXIT) chart analysis of their iterative decoding thresholds. All simulation results presented in this section are for the additive white Gaussian noise (AWGN) channel using BP-based SWD with window size $W$, measured in constraint lengths, and all codes used in the following examples are taken from CSOC code tables given in \cite{Messay1963,Robinson1967TIT,Wu1975TCOM_p1,Wu1976TCOM_p2,Klieber1970TIT}.

\subsection{Simulation Results}
We begin with an example illustrating the advantage of non-systematic CSOC protographs compared to their systematic counterparts.  As we note above, systematic protographs have the disadvantage that they contain a degree 1 VN, which negatively affects their performance.

\textbf{EXAMPLE 3}: Here we compare the performance of the systematic rate $R=2/3$, $m=13$, $J=4$, CSOC from Example 1 to the non-systematic rate $R=2/3$, $m=19$, $J=4$, $(4,12)$-regular CSOC from Example 2 with iterative SWD.  In each case the graphs are terminated after $L = 200$ (time units) and $M = 1$ (no lifting).
\begin{figure}[htbp]
   \centering
   \includegraphics[width=0.5\textwidth]{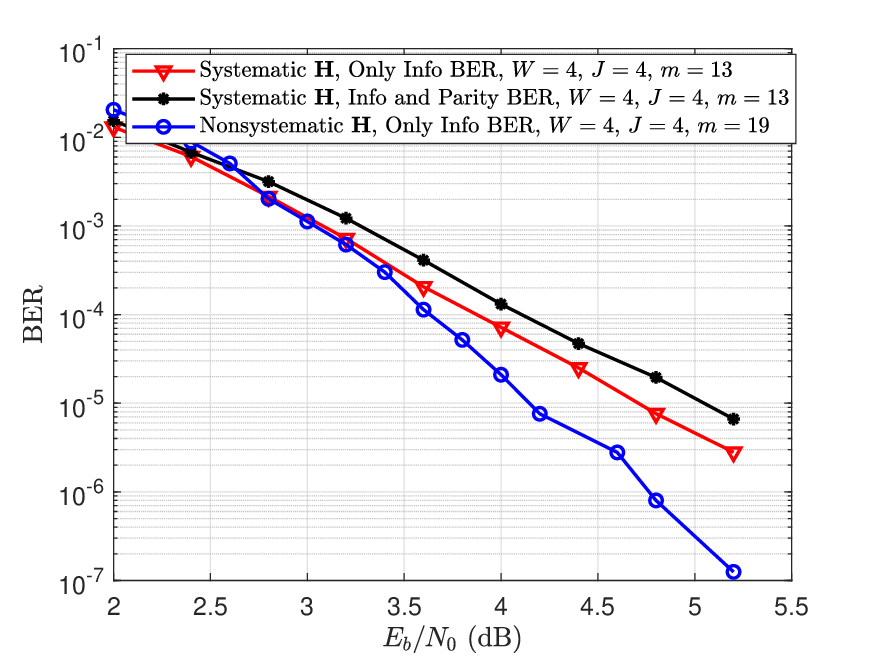}
   \caption{Performance comparison of two CSOCs, one systematic and one non-systematic, with $R=2/3$, $J=4$, and $L = 200$. }
   \label{fig:R23J4W4M1}
\end{figure}

In Figure \ref{fig:R23J4W4M1}, we show the bit error rate (BER) performance of these codes for window size $W = 4$ and 20 iterations of BP decoding.  (Note that $W = 4$ is usually large enough to reach the ultimate performance of SWD \cite{Huang2015TCOM}.) Three cases are considered: (i) the systematic code where we decode the parity bit along with the two information bits; (ii) the systematic code where we decode the information bits only, and (iii) the non-systematic code where we decode only the information bits, corresponding to the first two columns in each set of columns of $\bf{H}_{\rm{ns}}$ (see \eqref{eq:HnsJ4m19}). From the curves presented, we see that case (iii) (the non-systematic code) performs best, as we expect since it contains no degree 1 VNs.  We also note that, even when we decode only the information bits in the systematic code, the non-systematic code performs better.  This is due to the fact that the parity bit with degree 1 passes weak \emph{log-likelihood ratios} (LLRs) to the other bits, thus negatively affecting their performance.

If we calculate the BERs separately for the information bits and the parity bit, we find that, at $E_b/N_0 = 5.2$ dB, the systematic code gives BER $\approx 3 \times {10^{ - 6}}$ for the information bits and BER $\approx 1.35 \times {10^{ - 5}}$ for the parity bit, which is about 4.5 times as large as for the information bits, whereas the non-systematic code gives BER $\approx 1.2 \times {10^{-7}}$ for both information and parity bits. $\hfill\square$

\textbf{EXAMPLE 4}: In this example we illustrate the ease with which good high-rate SC-LDPC codes can be obtained by lifting convolutional protographs based on non-systematic CSOCs. We consider two codes:\\
Code I: a non-systematic rate $R = 13/14$, $m = 94$, $J = 3$, $(3,42)$-regular code based on the $R = 14/15$, $J = 3$ systematic CSOC given in \cite{Wu1976TCOM_p2},\\
Code II: a non-systematic rate $R = 10/11$, $m = 324$, $J = 5$, $(5,55)$-regular code based on the $R = 11/12$, $J = 5$ systematic CSOC given in \cite{Wu1975TCOM_p1}.

\begin{figure}[htbp]
   \centering
   \includegraphics[width=0.5\textwidth]{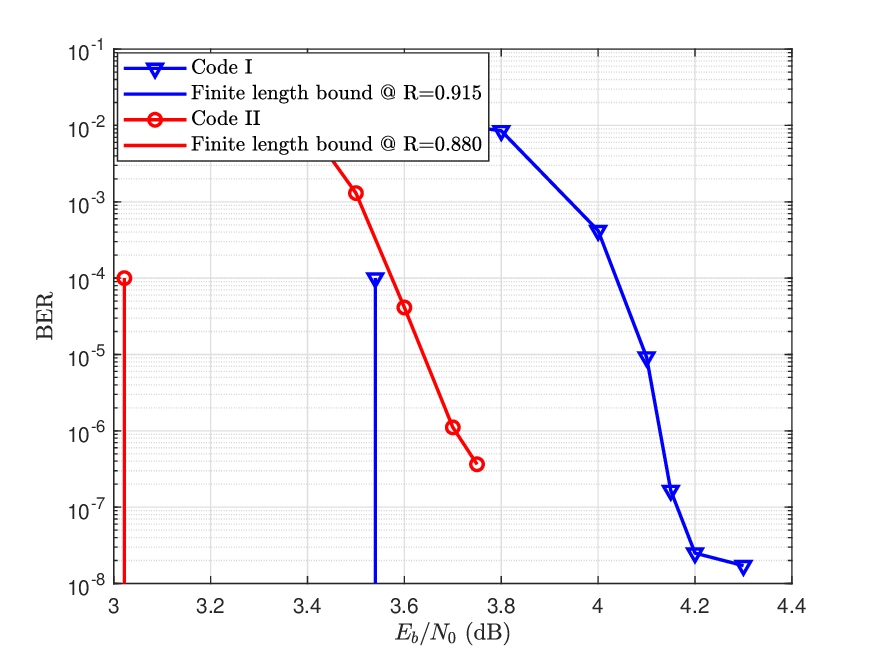}
   \caption{BER performance of two high-rate CSOC-based SC-LDPC codes.}
   \label{fig:R13_14R10_11}
\end{figure}

The protographs for each of these codes are then lifted and simulated using SWD with the parameters $M = 20$, $L = 500$, and $W = 4$ for Code I and $M = 10$, $L = 1000$, and $W = 2$ for Code II,
using 20 iterations in each case.  The results are shown in Figure \ref{fig:R13_14R10_11}. At a BER of $10^{-5}$, the codes perform within 0.55 dB (Code I) and 0.60 dB (Code II) of the finite length bound \cite{Polyanskiy2010TIT}, where the rate loss due to termination was taken into account in the comparisons, \emph{i.e.}, $R_{\rm{t}} = 1 - [(L+m)/L](1-R) = 0.915$ (Code I) and $0.880$ (Code II).  We also note that, as the termination length $L$ increases, the rate loss decreases, \emph{i.e.}, $R_{\rm{t}} \approx R$, and the performance gap to the finite length bound shrinks to 0.35 dB for Code I ($R=13/14=0.929$) and 0.25 dB for Code II ($R=10/11=0.909$).

From these results, we see that CSOCs provide a simple way of constructing high-rate SC-LDPC codes that have several advantages compared to other constructions, including (i) existing tables of CSOCs \cite{Messay1963,Robinson1967TIT,Wu1975TCOM_p1,Wu1976TCOM_p2,Klieber1970TIT}, covering a wide range of available rates $R$ (from 1/2 to 49/50) and code strengths $J$ (from 3 to 6, say) with low-density protographs, (ii) a guaranteed girth of 6, which eliminates the need to search for liftings without 4-cycles, (iii) a guaranteed free distance of $J + 1$, which improves error floor performance, and (iv) suitability for use with iterative BP-based low latency SWD.$\hfill\square$

In the literature, rate $R = (n-1)/n$ SC-LDPC codes are normally generated by lifting the $(J,nJ)$-regular convolutional protograph produced from a $1 \times n$ multi-edge base matrix ${\bf{B}} = [J~J~\cdots~J]$, with each of the $n$ VNs containing $J$ connections to the single check node, by spreading the edges to a chain of identical protographs in the following manner.  From each VN at the initial time unit, $J-1$ branches are spread to the next $J-1$ CNs in the chain, resulting in a $(J,nJ)$-regular convolutional protograph with single edges and memory $J-1$.  The process is illustrated in Figure \ref{fig:conv_proto_edge}.  We refer to this approach to constructing convolutional protographs as \emph{classical edge spreading} (see \cite{David2015TIT}).

\begin{figure}[htbp]
\centering
\subfigure[]{
  \includegraphics[width=0.14\linewidth]{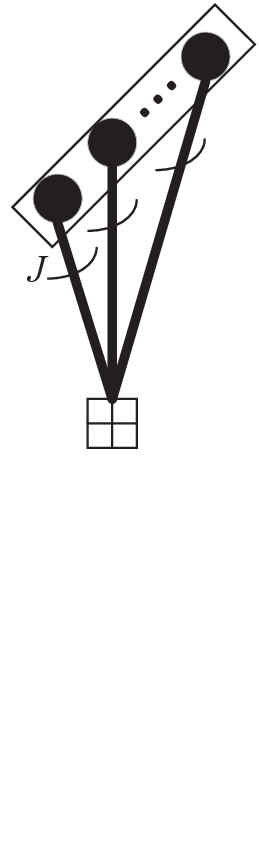}
  \label{fig:subfig-a}
}
\hfill
\subfigure[]{
  \includegraphics[width=0.5\linewidth]{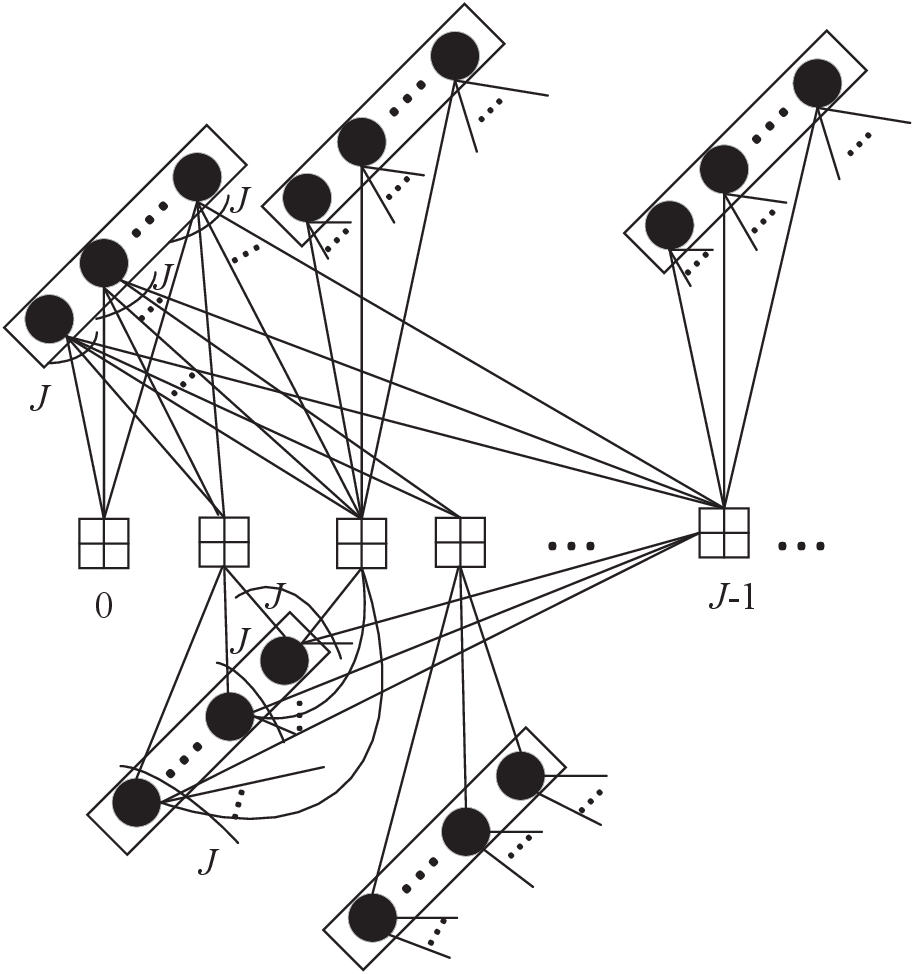}
  \label{fig:subfig-b}
}
\caption{(a) The multi-edge protograph representing the base matrix $\bf{B}$, (b) The corresponding single-edge convolutional protograph obtained by classical edge spreading.}
\label{fig:conv_proto_edge}
\end{figure}


\textbf{EXAMPLE 5}: We now compare the performance of SC-LDPC codes obtained by lifting convolutional protographs based on non-systematic CSOCs to comparable codes obtained by lifting convolutional protographs based on classical edge spreading.  All codes have rate $R = 2/3$ and are decoded using SWD with window size $W = 4$ and 20 iterations over terminated graphs of length $L = 200$.  A total of eight codes are considered:\\
Code I: an $m = 19, J = 4, (4,12)$-regular code with $M = 30$ and $\eta= 7200$ based on the non-systematic CSOC of Example 2,\\
Code II:  the same protograph as Code I with $M = 20$ and $\eta= 4800$,\\
Code III:  a (4,12)-regular code with $m = 3$ based on classical edge spreading with $M = 200$ and $\eta= 7200$,\\
Code IV:  the same protograph and lifting factor as Code III, with the lifting chosen following a search procedure to guarantee girth 6,\\
Code V:  an $m = 10, J = 3, (3,9)$-regular code with $M = 55$ and $\eta= 7260$ based on the non-systematic CSOC obtained from the $R = 3/4, J = 3$ systematic CSOC given in \cite{Robinson1967TIT},\\
Code VI:  the same protograph and lifting factor as Code V with girth 8 circulant liftings \cite{Fonseca2023ISTC},\\
Code VII:  a (3,9)-regular code with $m = 2$ based on classical edge spreading with $M = 200$ and $\eta= 7200$,\\
Code VIII:  the same protograph and lifting factor as Code VII, with the lifting chosen following a search procedure to guarantee girth 6.

The results are shown in Figure \ref{fig:R2_3}. From the figure, we make the following observations:

\begin{itemize}
  \item The lifted CSOCs (Codes I and V) perform about 0.1 dB better than the corresponding lifted classical protographs (Codes III and VII) at a BER $=10^{-5}$.  This indicates a possible performance advantage in designing SC-LDPC codes based on CSOCs.
  \item The $J = 4, (4,12)$-regular codes (Codes I and III) perform about 0.2 dB better than the corresponding $J = 3$, $(3,9)$-regular codes (Codes V and VII) at a BER $=10^{-5}$.  This is consistent with established theory \cite{David2015TIT}, indicating that the thresholds of protograph-based SC-LDPC codes with BP-based decoding improve with increasing code strength $J$ (corresponding to the improved MAP decoding thresholds of the underlying LDPC block codes).
  \item If a search procedure is used to guarantee that the lifted classical protographs have girth at least 6 (Codes IV and VIII), the lifted CSOCs (Codes I and V) still maintain a small performance advantage.  This implies that the guarantee of girth 6 that comes with using lifted CSOCs avoids having to employ a search procedure to eliminate liftings with 4-cycles without paying a performance penalty.
  \item The performance of Code II, with latency $\eta= 4800$, is about 0.2 dB worse than that of Code I, which is identical to Code II except that it has $\eta= 7200$, a $50\% $ increase compared to Code I.  This is consistent with the normal expected tradeoff that exists between code performance and latency.
  \item The performance of the $J = 4, (4,12)$-regular code with $\eta= 4800$ (Code II) closely tracks that of the $J = 3$, (3,9)-regular code with $\eta= 7260$ (Code V).  This suggests another way of viewing the advantage of stronger ($J = 4$) codes compared to weaker ($J = 3$) codes, \emph{i.e.}, the stronger code can achieve essentially the same performance as the weaker code with ($50\%$ in this case) less latency.
  \item The performance of code VI, designed for girth 8 using only circulant liftings, closely tracks that of Code V, which employed random liftings and has girth 6.  This suggests that using carefully designed circulant liftings to reduce implementation complexity doesn't significantly affect performance.$\hfill\square$
\end{itemize}

\begin{figure}[htbp]
   \centering
   \includegraphics[width=0.5\textwidth]{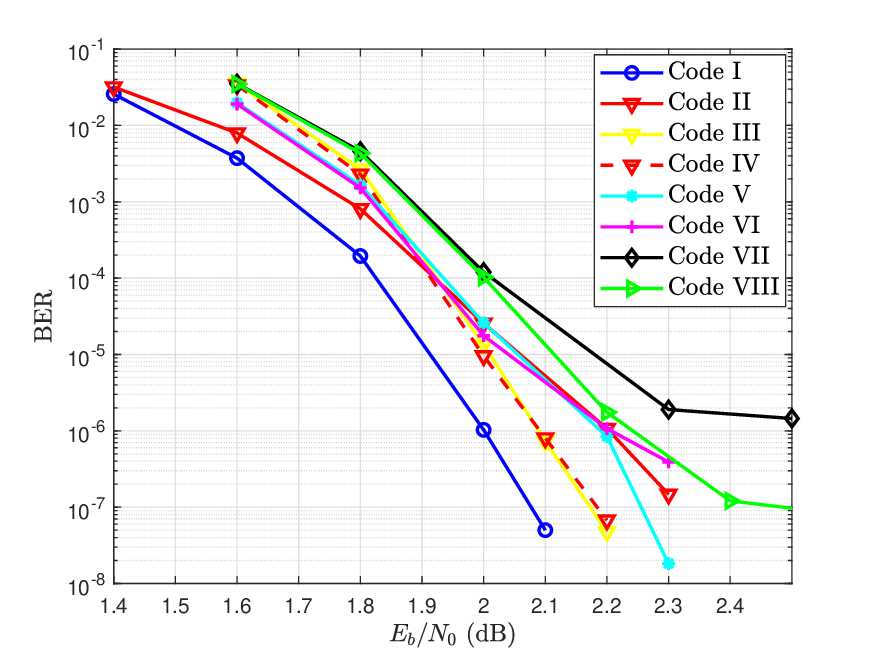}
   \caption{Performance of SC-LDPC codes obtained by lifting convolutional protographs based on non-systematic CSOCs and on classical edge-spreading. }
   \label{fig:R2_3}
\end{figure}

As we can see from the above example, for a given code rate $R = k/n$ and window size $W$, constructing SC-LDPC codes with a given decoding latency $\eta= WMn(m+1)$ using CSOC-based protographs typically involves larger values of the code memory $m$ and smaller values of the lifting factor $M$ than classical edge spreading. Previous work \cite{David2016ITW} has suggested that there may be some advantage in using larger values of $m$ along with smaller values of $M$ to construct SC-LDPC codes, and it is tempting to interpret the above results in this light. But we note here that the CSOC-based Codes I and V have slightly lower rates than Codes III and VII constructed using classical edge spreading due to the termination after $L = 200$ time units.  Specifically, Code I has rate $R_{\rm{t}} = 0.635$ compared to $R_{\rm{t}} = 0.662$ for Code III, while Code V has rate $R_{\rm{t}} = 0.650$ compared to $R_{\rm{t}} = 0.663$ for Code VII, which may explain some or all of the difference in the observed code performance.

In the next section, we use iterative decoding thresholds determined using the PEXIT chart technique \cite{Liva2007GLOBECOM} to examine further the role played by the rate in explaining the observed performance difference between CSOC-based and classical edge spreading.

\subsection{Threshold Analysis}
The BP-based iterative decoding thresholds of Codes I, III, V, and VII, along with their respective code rates $R_{\rm{t}}$ and gaps to capacity, calculated using PEXIT chart analysis, are presented in Table \ref{tab:pexit}. Rows marked (A) show the results for a graph termination length of $L = 200$, while rows marked (B) give the results for $L = 1000$.

\begin{table}[htbp]
   \renewcommand{\arraystretch}{1.3}
   \caption{Iterative decoding thresholds and gaps to capacity of codes I, III, V, and VII for (a) $L = 200$ and (b) $L = 1000$.}
   \label{tab:pexit}
   \centering
   \begin{tabular}{|c|c|c|c|c|}
     \hline
     Codes & \makecell{Actual\\ rate} & \makecell{Capacity @ \\actual rate (dB)} & \makecell{Threshold\\ (dB)} & \makecell{Gap\\ (dB)}\\
      \hline
      Code I (A) & 0.635 & 0.8796 & 1.349854 & 0.470254\\
      \hline
      Code III (A)  & 0.6617 & 1.034 & 1.186035 & 0.152035\\
      \hline
      Code V (A) & 0.65 & 0.967 & 1.470032 & 0.510032\\
      \hline
      Code VII (A)  & 0.6633 & 1.044 & 1.392822 & 0.348822\\
      \hline
      \hline
      Code I (B) & 0.6603 & 1.025 & 1.190735 & 0.165735\\
      \hline
      Code III (B) & 0.6657 & 1.06 & 1.220947 & 0.160947\\
      \hline
      Code V (B) & 0.6633 & 1.045 & 1.394165 & 0.349165\\
      \hline
      Code VII (B) & 0.666 & 1.063 & 1.425049 & 0.362049\\
      \hline
   \end{tabular}
 \end{table}
For the $L = 200$ case, we see that Codes III and VII (classical edge spreading) have smaller gaps to capacity than Codes I and V (CSOC-based edge spreading), by margins of about 0.32 dB and 0.16 dB, respectively.  This suggests that the performance advantage of Codes I and V over Codes III and VII observed in Figure 6 will diminish if stronger codes (larger values of $M$) are used.

For the $L = 1000$ case, however, we observe that the respective gaps to capacity are essentially identical, indicating that, similar to classical edge spreading, CSOC-based SC-LDPC codes with SWD (and sufficient window size $W$) have thresholds approaching the MAP decoding thresholds of the underlying LDPC-BCs for large values of $M$ and $L$.  In addition, as indicated in Figure \ref{fig:R2_3}, for smaller values of $M$ and $L$ CSOC-based SC-LDPC codes may perform a little better than codes based on classical edge spreading, albeit with a small rate loss.

\section{Conclusion}

In this paper we have presented a new class of convolutional protographs based on the CSOCs originally introduced by Massey \cite{Messay1963} for use with low-complexity, low-latency threshold decoding.  Here we have modified Massey's original systematic CSOCs to a non-systematic form, making them better suited for use as convolutional protographs, which can then be expanded using the graph lifting procedure to produce a powerful class of SC-LDPC codes.  Iterative BP-based SWD can then be employed to provide (for large $M$ and $L$) performance approaching that of the MAP decoding performance of the underlying LDPC-BC with moderate decoding complexity and latency.

This new class of CSOC-based convolutional protographs has several attractive features:
\begin{itemize}
  \item A large catalog of high-rate (from $R = 1/2$ to $R = 49/50$) CSOCs is available in the published literature, facilitating the construction of new classes of high-rate SC-LDPC codes.
  \item The lifted protographs based on rate $R = (n-1)/n$ non-systematic CSOCs generate a class of $(J,nJ)$-regular SC-LDPC codes with guaranteed girth 6 and free distance $J+1, J = 2, 3, \ldots$.
  \item Using the same construction for larger values of $J$, say $J > 6$, results in a class of SC-MDPC codes capable of capacity-approaching performance with BP-based iterative decoding.
  \item Basic (unlifted) CSOCs can be decoded using classical low-complexity, low-latency approaches such as threshold (hard decision) decoding and APP (soft decision) decoding, adding flexibility to a system design.
  \item When lifting is employed, alternatives such as using circulant (quasi-cyclic) and/or time-invariant liftings can lower the implementation complexity.
  \item Although this paper has emphasized using high-rate CSOCs to construct convolutional protographs for SC-LDPC codes, there is no reason to think that the theory of perfect difference sets cannot also be used to construct lower rate CSOCs and convolutional protographs with similarly attractive properties.
\end{itemize}
The simulation and PEXIT chart results presented in Section IV demonstrate that CSOC-based SC-LDPC codes perform at least as well as SC-LDPC codes based on convolutional protographs commonly found in the literature and that their iterative decoding thresholds with SWD approach the MAP decoding thresholds of the underlying LDPC-BCs as $L$ grows large.

Finally, it is interesting to note that Massey did his Ph.D. thesis on threshold decoding and CSOCs at MIT shortly after Gallager's groundbreaking Ph.D. thesis on LDPC-BCs \cite{Gallager1963MIT}.  Both were interested in finding code constructions amenable to decoding algorithms with simple implementations, but took starkly different approaches.  It is somewhat satisfying, after the passage of more than 60 years, to find that there are in fact close connections between these two lines of work.





\end{document}